\newtheorem{theorem}{Theorem}
\newtheorem{algorithm}{Algorithm}
\newtheorem{definition}{Definition}
\begin{document}

\title{\textbf{The nonlinear Bernstein-Schr\"{o}dinger equation in Economics}}
\author{Alfred Galichon\thanks{\textit{\textit{Economics Department, }Sciences Po,
Paris, and New York University. Email: alfred.galichon@sciencespo.fr. Galichon
gratefully acknowledges funding from the European Research Council under the
European Union's Seventh Framework Programme (FP7/2007-2013) / ERC grant
agreements no 313699 and 295298, and FiME.} }
\and Scott Duke Kominers\thanks{\textit{Society of Fellows, Harvard University.
Email: kominers@fas.harvard.edu. Kominers gratefully acknowledges the support
of NSF grant CCF-1216095 and the Harvard Milton Fund. }}
\and Simon Weber\thanks{\textit{Economics Department, Sciences Po, Paris.}
\textit{Address: Sciences Po, Department of Economics, 27 rue Saint-Guillaume,
75007 Paris, France. Email: simon.weber@sciencespo.fr.}}}
\maketitle

\bigskip

\bigskip

\bigskip

In this note, we will review and extend some results from our previous work
\cite{GKW} where we introduced a novel approach to imperfectly transferable
utility and unobserved heterogeneity in tastes, based on a nonlinear
generalization of the Bernstein-Schr\"{o}dinger equation. We consider an
assignment problem where agents from two distinct populations may form pairs,
which generates utility to each agent. Utility may be transfered across
partners, possibly with frictions. This general framework hence encompasses
both the classic Non-Tranferable Utility model (NTU) of Gale and Shapley
\cite{GaleShapley62}, sometimes called the \textquotedblleft stable marriage
problem\textquotedblright, where there exists no technology to allow transfers
between matched partners; and the Transferable Utility (TU) model of Becker
\cite{Becker73} and Shapley-Shubik \cite{SS}, a.k.a. \textquotedblleft optimal
assignment problem,\textquotedblright\ where utility (money) is additively
transferable across partners.

If the NTU assumption seems natural for many markets (including school
choices), TU models are more appropriate in most settings where there can be
bargaining (labour and marriage markets for example). However, even in those
markets, there can be transfer frictions. For example, in marriage markets,
the transfers between partners might take the form of favor exchange (rather
than cash), and the cost of a favor to one partner may not exactly equal the
benefit to the other.

In \cite{GKW}, we thus developped a general Imperfectly Transferable Utility
model with unobserved heterogeneity, which includes as special cases the
classic fully- and non-transferable utility models, but also extends to
collective models, and settings with taxes on transfers, deadweight losses,
and risk aversion. As we argue in the present note, the models we consider
in~\cite{GKW} obey a particularly simple system of equations we dubbed
\textquotedblleft Nonlinear Bernstein-Schr\"{o}dinger
equation\textquotedblright. The present contribution present a general result
for the latter equation, and also derives several consequences. The main
result is derived in Section 1. Section 2 and Section 3 consider equilibrium
assignment problems with and without heterogeneity. Finally, we provide a
discussion of our results in Section 4.

\section{The main result}

For $x\in\mathcal{X}$ and $y\in\mathcal{Y}$, we consider a function
$M_{xy}:\mathbb{R}^{2}\rightarrow\mathbb{R}$, and $M_{x0}:\mathbb{R}%
\rightarrow\mathbb{R}$ and $M_{0y}:\mathbb{R}\rightarrow\mathbb{R}$. Let
$\left(  n_{x}\right)  _{x\in\mathcal{X}}$ and $\left(  m_{y}\right)
_{y\in\mathcal{Y}}$ be vectors of positive numbers. Consider the
\emph{nonlinear Bernstein-Schr\"{o}dinger system}, which consists in looking
for two vectors $u\in\mathbb{R}^{\mathcal{X}}$ and $v\in\mathbb{R}%
^{\mathcal{Y}}$ such that%
\[
\left\{
\begin{array}
[c]{c}%
M_{x0}\left(  u_{x}\right)  +\sum_{y\in\mathcal{Y}}M_{xy}\left(  u_{x}%
,v_{y}\right)  =n_{x}\\
M_{0y}\left(  v_{y}\right)  +\sum_{x\in\mathcal{X}}M_{xy}\left(  u_{x}%
,v_{y}\right)  =m_{y}%
\end{array}
\right.  .
\]

\begin{theorem}
\label{th:Existence}Assume $M_{xy}$ satisfies the following three conditions:

(i) Continuity. The maps $M_{xy}:(u_{x},v_{y})\longmapsto M_{xy}(u_{x}%
,v_{y}),$ $M_{x0}:(u_{x})\longmapsto M_{x0}(u_{x})$ and $M_{0y}:(v_{y}%
)\longmapsto M_{0y}(v_{y})$ are continuous

(ii) Monotonicity. The map $M_{xy}:(u_{x},v_{y})\longmapsto M_{xy}(u_{x}%
,v_{y})$ is monotonically decreasing, i.e if $u_{x}\leq u_{x}^{\prime}$ and
$u_{y}\leq u_{y}^{\prime}$, then $M_{xy}(u_{x},v_{y})\geq M(u_{x}^{\prime
},v_{y}^{\prime})$. The maps $M_{x0}:(u_{x})\longmapsto M_{x0}(u_{x})$ and
$M_{0y}:(v_{y})\longmapsto M_{0y}(v_{y})$ are monotonically decreasing.

(iii) Limits. For each $v_{y},\lim_{u_{x}\rightarrow\infty}M_{xy}(u_{x}%
,v_{y})=0$ and $\lim_{u_{x}\rightarrow-\infty}M_{xy}(u_{x},v_{y})=+\infty$,
and for each $u_{x},$ $\lim_{u_{y}\rightarrow\infty}M_{xy}(u_{x},v_{y})=0$ and
$\lim_{u_{y}\rightarrow-\infty}M_{xy}(u_{x},v_{y})=+\infty$. Additionally,
$\lim_{u_{x\rightarrow\infty}}M_{x0}(u_{x})=0$ and $\lim_{u_{y}\rightarrow
\infty}M_{0y}(u_{y})=0$, and $\lim_{u_{x}\rightarrow-\infty}M_{x0}%
(u_{x})=+\infty$ and $\lim_{u_{y}\rightarrow-\infty}M_{0y}(u_{y})=+\infty.$

\medskip

Then there exists a solution to the nonlinear Bernstein-Schr\"{o}dinger
system. Further, if the maps $M$ are $C^{1}$, the solution is unique.
\end{theorem}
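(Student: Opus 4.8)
The plan is to establish existence by collapsing the coupled system into a single monotone fixed-point problem, and to obtain uniqueness (under the $C^1$ hypothesis) through a global univalence argument.

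First I would solve each equation coordinate by coordinate. For fixed $v$, the left-hand side of the $x$-th equation, $u_x\mapsto M_{x0}(u_x)+\sum_y M_{xy}(u_x,v_y)$, is continuous and decreasing by (i)--(ii), tends to $+\infty$ as $u_x\to-\infty$ and to $0$ as $u_x\to+\infty$ by (iii); since $n_x\in(0,\infty)$, the intermediate value theorem yields a solution $U_x(v)$ (the smallest one, if monotonicity is only weak), and $v\mapsto U(v)$ is decreasing. Symmetrically the $y$-equations define a decreasing map $u\mapsto V(u)$. Because every $M_{xy}\ge 0$ (a consequence of (ii)--(iii)), the second equation forces $M_{0y}(v_y)\le m_y$, so $V_y(u)\ge\underline v_y$ uniformly in $u$, where $M_{0y}(\underline v_y)=m_y$; likewise $U_x(v)\ge\underline u_x$ with $M_{x0}(\underline u_x)=n_x$.

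Next, the composition $T:=U\circ V$ is an increasing self-map, which I would localize to a compact order interval. The uniform lower bound gives $T(u)\ge\underline u$; for the upper bound, $V(u)\ge\underline v$ together with $U$ decreasing yields $T(u)=U(V(u))\le U(\underline v)=:\overline u$. Thus $T$ maps $[\underline u,\overline u]$ into itself. This box is a complete lattice and $T$ is order-preserving, so Tarski's fixed-point theorem produces $u^\star=T(u^\star)$, and $(u^\star,V(u^\star))$ solves the system. (Invoking Tarski sidesteps any continuity argument for $U,V$; alternatively, under strict monotonicity these maps are continuous and the iteration $u^{(n+1)}=T(u^{(n)})$ started at $\underline u$ increases to a fixed point.) The one substantive point here is the existence of the a priori box: it holds precisely because the outside-option limits in (iii) pin $u,v$ from below, while monotonicity of $U$ converts the lower bound on $V$ into an upper bound on $T$.

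For uniqueness under the $C^1$ hypothesis I would use global univalence. Writing $F(u,v)$ for the left-hand side of the system, its Jacobian $J$ has strictly negative diagonal entries $M_{x0}'+\sum_y\partial_1 M_{xy}<0$ and $M_{0y}'+\sum_x\partial_2 M_{xy}<0$, and nonpositive off-diagonal entries $\partial_2 M_{xy}\le 0$, $\partial_1 M_{xy}\le 0$, by (ii). Set $A:=-J$, which has positive diagonal and nonnegative off-diagonal. Along the column indexed by $x$ the off-diagonal mass is $\sum_y|\partial_1 M_{xy}|$ while the diagonal is $|M_{x0}'|+\sum_y|\partial_1 M_{xy}|$, so $A$ is strictly column-diagonally dominant, the excess being exactly the outside-option derivative $|M_{x0}'|>0$ (and symmetrically $|M_{0y}'|>0$ for the $y$-columns). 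A strictly diagonally dominant matrix with positive diagonal has every principal submatrix of the same type, hence positive determinant, so $A$ is a $P$-matrix on all of $\mathbb{R}^{\mathcal X}\times\mathbb{R}^{\mathcal Y}$; by the Gale--Nikaido global univalence theorem $-F$ is injective, so the system has at most one solution, which with existence gives uniqueness. The hard part is exactly this step, and it clarifies why $C^1$ is doing real work: the strict negativity of $M_{x0}',M_{0y}'$ supplies the diagonal dominance and thereby breaks the translation degeneracy $u\mapsto u+c,\ v\mapsto v-c$ that would otherwise afflict a pure Schr\"odinger/Sinkhorn system. The only technical gaps left to fill are the well-definedness and monotonicity of $U,V$ when monotonicity is merely weak (handled by selecting extremal solutions), and points where some $M_{x0}'$ vanishes, where column dominance becomes weak and one must instead invoke irreducible diagonal dominance to keep $A$ nonsingular.
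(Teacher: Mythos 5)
Your proposal is correct and takes essentially the same route as the paper: existence by inverting each block of equations into anti-isotone coordinate maps whose isotone composition has a fixed point by Tarski (the paper runs exactly this as a monotone iteration started at $v^{0}=+\infty$), and uniqueness by showing the Jacobian of the system is (column-)diagonally dominant, hence a P-matrix, hence injective by Gale--Nikaido. The only difference is cosmetic---you localize Tarski to a compact order interval rather than iterating from $+\infty$---and you additionally flag the weak-dominance edge case where $M_{x0}'$ vanishes, which the paper's ``straightforward'' dominance claim silently glosses over.
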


\bigskip

This theorem appears in \cite{GKW} under a slightly different form. The proof
is interesting as it provides an algorithm of determination of $u$ and $v$. It
is an important generalization of the Iterated Projection Fitting Procedure
(see \cite{DS}, \cite{R}, and \cite{RT}), which has been rediscovered and
utilized many times under different names for various applied purposes:
\textquotedblleft RAS\ algorithm\textquotedblright\ \cite{K},
\textquotedblleft biproportional fitting\textquotedblright, \textquotedblleft
Sinkhorn Scaling\textquotedblright\ \cite{C}, etc. However, all these
techniques and their variants can be recast as particular cases of the method
described in the proof of Theorem~\ref{th:Existence}. For convenience, we
recall the algorithm used to provide a constructive proof of existence.

\begin{algorithm}
~

\label{algorithma}
\begin{tabular}
[c]{r|p{5in}}%
Step $0$ & Fix the initial value of $v_{y}$ at $v_{y}^{0}=+\infty$.\\
Step $2t+1$ & Keep the values $v_{y}^{2t}$ fixed. For each $x\in\mathcal{X}$,
solve for the value, $u_{x}^{2t+1}$ such that equality $\sum_{y\in\mathcal{Y}%
}M_{xy}(u_{x},v_{y}^{2t})+M_{x0}(u_{x})=n_{x}$ holds.\\
Step $2t+2$ & Keep the values $u_{x}^{2t+1}$ fixed. For each $y\in\mathcal{Y}%
$, solve for which is the value, $v_{y}^{2t+2}$ such that equality $\sum
_{x\in\mathcal{X}}M_{xy}(u_{x}^{2t+1},v_{y})+M_{0y}(v_{y})=m_{y}$ holds.
\end{tabular}
Then $u^{t}$ and $v^{t}$ converge monotonically to a solution of the
Bernstein-Schr\"{o}dinger system.
\end{algorithm}

In practice a precision level $\epsilon>0$ will be chosen, and the algorithm
will terminate when $\sup_{y}|v_{y}^{2t+2}-v_{y}^{2t}|<\epsilon$.

\begin{proof}
(i) Existence. The proof of existence is an application of Tarski's fixed
point theorem and relies on the previous Algorithm. We need to prove that the
construction of $u_{x}^{2t+1}$ and $v_{y}^{2t+2}$ at each step is well
defined. Consider step $2t+1$. For each $x\in\mathcal{X}$, the equation to
solve is
\[
\sum_{y\in\mathcal{Y}}M_{xy}(u_{x},v_{y})+M_{x0}(u_{x})=n_{x}%
\]
but the right handside is a continuous and decreasing function of $u_{x}$,
tends to $0$ when $u_{x}\rightarrow+\infty$ and tends to $+\infty$ when
$u_{x}\rightarrow-\infty$. Note that by letting $v_{y}\rightarrow+\infty$, the
terms in the sum tends to $0$, providing a lower bound for $u_{x}$. Hence
$u_{x}^{2t+1}$ is well defined and belongs in $\left(  M_{x0}^{-1}%
(n_{x}),+\infty\right)  $, and let us denote
\[
u_{x}^{2t+1}=F_{x}(v_{.}^{2t})
\]
and clearly, $F$ is anti-isotone, meaning that $v_{y}^{2t}\leq\tilde{v}%
_{y}^{2t}$ for all $y\in\mathcal{Y}$ implies $F_{x}(\tilde{v}_{.}^{2t})\leq
F_{x}(v_{.}^{2t})$ for all $x\in\mathcal{X}$. By the same token, at step
$2t+2$, $v_{y}^{2t+2}$ is well defined in $\left(  M_{0y}^{-1}(m_{y}%
),+\infty\right)  $, and let us denote%
\[
v_{y}^{2t+2}=G_{y}(u_{.}^{2t+1})
\]
where, similarly, $G$ is anti-isotone. Thus%
\[
v_{.}^{2t+2}=G\circ F\left(  v_{.}^{2t}\right)
\]
where $G\circ F$ is isotone. But $v_{y}^{2}<\infty=v_{y}^{0}$ implies that
$v_{.}^{2t+2}\leq G\circ F\left(  v_{.}^{2t}\right)  $. Hence $\left(
v_{.}^{2t+2}\right)  _{t\in\mathbb{N}}$ is a decreasing sequence, bounded from
below by $0$. As a result $v_{.}^{2t+2}$ converges. Letting $\bar{v}_{.}$ its
limit, and letting $\bar{u}=F(\bar{v})$, one can see that $(\bar{u},\bar{v})$
is a solution to the nonlinear \emph{Bernstein-Schr\"{o}dinger system.}
(ii) Unicity. Introduce map $\zeta$ defined by
\[
\zeta:\left(  u_{x},v_{y}\right)  \rightarrow\binom{\zeta_{x}=\sum
_{y\in\mathcal{Y}}M_{xy}\left(  u_{x},v_{y}\right)  +M_{x0}(u_{x})}{\zeta
_{y}=\sum_{x\in\mathcal{X}}M_{xy}\left(  u_{x},v_{y}\right)  +M_{0y}(v_{y})}%
\]
One has
\[
D\zeta\left(  u_{x},v_{y}\right)  =\left(
\begin{array}
[c]{c|c}%
A & B\\\hline
C & D
\end{array}
\right)
\]
where:
\begin{itemize}
\item $A=\left(  \partial\zeta_{x}/\partial u_{x^{\prime}}\right)
_{xx^{\prime}}=\sum_{y^{\prime}\in\mathcal{Y}}\partial_{u_{x}}M_{xy^{\prime}%
}\left(  u_{x},v_{y^{\prime}}\right)  +1$ if $x=x^{\prime}$, $0$ otherwise.
\item $B=\left(  \partial\zeta_{x}/\partial v_{y}\right)  _{xy}=\partial
_{v_{y}}M_{xy}\left(  u_{x},v_{y}\right)  $
\item $C=\left(  \partial\zeta_{y}/\partial u_{x}\right)  _{yx}=\partial
_{u_{x}}M_{xy}\left(  u_{x},v_{y}\right)  $
\item $D=\left(  \partial\zeta_{y}/\partial v_{y^{\prime}}\right)
_{yy^{\prime}}=\sum_{x^{\prime}\in\mathcal{X}}\partial_{v_{y}}M_{x^{\prime}%
y}\left(  u_{x^{\prime}}v_{y}\right)  +1$ if $y=y^{\prime}$, $0$ otherwise.
\end{itemize}
It is straightforward to show that the matrix $D\zeta$ is dominant diagonal. A
result from \cite{McK} states that a dominant diagonal matrix with positive
diagonal entries is a P-matrix. Hence $D\zeta\left(  u_{x},v_{y}\right)  $ is
a P-matrix. Applying Theorem 4 in \cite{GN} it follows that $\zeta$ is injective.
\end{proof}
\section{Equilibrium Assignment Problem}
In this section, we consider the equilibrium assignment problem, which is a
far-reaching generalization of the optimal assignment problem. To describe
this framework, consider two finite populations $\mathcal{I}$ and
$\mathcal{J}$, and a two-sided matching framework (for simplicity, we will
call \textquotedblleft men\textquotedblright\ and \textquotedblleft
women\textquotedblright\ the two sides of this market) with imperfect
transfers and without heterogeneity. Agents $i\in\mathcal{I}$ and
$j\in\mathcal{J}$ get respectively utility $u_{i}$ and $v_{j}$ they get at
equiliburium. If $i$ or $j$ remains unmatched, they get utility $0$; however,
if they match together, they may get any respective utilities $u_{i}$ and
$v_{j}$ such that the feasibility constraint is imposed%
\begin{equation}
\Psi_{ij}\left(  u_{i},v_{j}\right)  \leq0, \label{feasib}%
\end{equation}
where the transfer function $\Psi_{ij}$ is assumed to be continuous and
isotone with respect to its arguments. Note that at equilibrium, $u_{i}\geq0$
and $v_{j}\geq0$ as the agents always have the option to remain unassigned; by
the same token, if for any pair $i$, $j$ (matched or not), one cannot have a
strict inequality in (\ref{feasib}), otherwise $i$ and $j$ would have an
incentive to form a blocking pair, and achieve a higher payoff than their
equilibrium payoff. Thus the stability condition $\Psi_{ij}\left(  u_{i}%
,v_{j}\right)  \geq0,$ holds in general. Let $\mu_{ij}=1$ if $i$ and $j$ are
matched, and $0$ otherwise; we have therefore that $\mu_{ij}>0$ implies that
$\Psi_{ij}\left(  u_{i},v_{j}\right)  =0$. This allows us to define an
equilibrium outcome.
\begin{definition}
\label{def:EAP}The equilibrium assignment problem defined by $\Psi$ has an
equilibrium outcome $\left(  \mu_{ij},u_{i},v_{j}\right)  $ whenever the
following conditions are met:
(i) $\mu_{ij}\geq0$, $u_{i}\geq0$ and $v_{j}\geq0$
(ii) $\sum_{j}\mu_{ij}\leq1$ and $\sum_{i}\mu_{ij}\leq1$
(iii) $\Psi_{ij}\left(  u_{i},v_{j}\right)  \geq0$
(iv) $\mu_{ij}>0$ implies $\Psi_{ij}\left(  u_{i},v_{j}\right)  =0$.
\end{definition}
Note that, by the Birkhoff-von Neumann theorem, the existence of an
equilibrium in this problem leads to the existence of an equilibrium
satisfying the stronger integrality requirement $\mu_{ij}\in\left\{
0,1\right\}  $.
\bigskip
This general framework allow us to express the optimal assignment problem
(matching with Transferable Utility), as the case where%
\[
\Psi_{ij}\left(  u_{i},v_{j}\right)  =u_{i}+v_{j}-\Phi_{ij},
\]
while in the NTU case%
\[
\Psi_{ij}\left(  u_{i},v_{j}\right)  =\max(u_{i}-\alpha_{ij},v_{j}-\gamma
_{ij}).
\]
Other interesting cases are considered in~\cite{GKW}. For instance, the
\textit{Linearly Transferable Utility} (\textit{LTU}) model, where%
\[
\Psi_{ij}\left(  u_{i},v_{j}\right)  =\lambda_{ij}(u_{i}-\alpha_{ij}%
)+\zeta_{ij}(v_{j}-\gamma_{ij})
\]
with $\lambda_{ij},\zeta_{ij}>0$, and the \textit{Exponentially Transferable
Utility} (\textit{ETU}) model, in which $\Psi_{ij}$ takes the form
\[
\Psi_{ij}\left(  u_{i},v_{j}\right)  =\tau\log\left(  \frac{\exp(u_{i}%
/\tau)+\exp(v_{j}/\tau)}{2}\right) .
\]
In the ETU model, the parameter $\tau_{ij}$ is defined as the \textit{degree
of transferability}, since $\tau\rightarrow+\infty$ recovers the TU case and
$\tau\rightarrow0$ recovers the NTU framework.
\bigskip
\begin{theorem}
Assume $\Psi$ is such that:
\label{a11}(a) For any $x\in\mathcal{X}$ and $y\in\mathcal{Y}$, we have
$\Psi_{xy}\left(  \cdot,\cdot\right)  $ continuous.
\label{a12}(b) For any $x\in\mathcal{X}$, $y\in\mathcal{Y}$, $t\leq t^{\prime
}$ and $r\leq r^{\prime}$, we have $\Psi_{xy}\left(  t,r\right)  \leq\Psi
_{xy}\left(  t^{\prime},r^{\prime}\right)  $; furthermore, when $t<t^{\prime}$
and $r<r^{\prime}$, we have $\Psi_{xy}\left(  t,r\right)  <\Psi_{xy}\left(
t^{\prime},r^{\prime}\right)  $.
\label{a13}(c) For any sequence $\left(  t_{n},r_{n}\right)  $, if $\left(
r_{n}\right)  $ is bounded and $t_{n}\rightarrow+\infty$, then $\liminf
\Psi_{xy}\left(  t_{n},r_{n}\right)  >0$. Analogously, if $\left(
t_{n}\right)  $ is bounded and $r_{n}\rightarrow+\infty$, then $\liminf
\Psi_{xy}\left(  t_{n},r_{n}\right)  >0$.
\label{a14}(d) For any sequence $\left(  t_{n},r_{n}\right)  $ such that if
$\left(  t_{n}-r_{n}\right)  $ is bounded and $t_{n}\rightarrow-\infty$ (or
equivalently, $r_{n}\rightarrow+\infty$), we have that $\limsup\Psi
_{xy}\left(  t_{n},r_{n}\right)  <0$.
Then the equilibrium assignment problem defined by $\Psi_{ij}$ has an
equilibrium outcome.
\end{theorem}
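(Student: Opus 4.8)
The plan is to obtain the equilibrium as the vanishing-temperature limit of a family of nonlinear Bernstein-Schr\"odinger systems, each of which is solved by Theorem~\ref{th:Existence}. Since every agent is a single individual, I take all margins equal to one, $n_i=1$ and $m_j=1$, and I introduce a regularization parameter $n\in\mathbb{N}$ that will be sent to $+\infty$.

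\emph{Construction of the auxiliary systems.} For each $n$ I set
\[ M^{(n)}_{ij}(u,v)=\exp\!\big(-n\,\Psi_{ij}(u,v)-(u+v)/n\big),\qquad M^{(n)}_{i0}(u)=e^{-nu},\qquad M^{(n)}_{0j}(v)=e^{-nv}. \]
Monotonicity (ii) of Theorem~\ref{th:Existence} follows from the isotonicity of $\Psi_{ij}$ in (b), and continuity (i) from (a). For the limit condition (iii): when $u\to+\infty$ with $v$ fixed the term $-(u+v)/n\to-\infty$ forces $M^{(n)}_{ij}\to0$, while when $u\to-\infty$ the same term tends to $+\infty$ and dominates, giving $M^{(n)}_{ij}\to+\infty$ (this linear regularizer is what rescues the $-\infty$ limit, which $\exp(-n\Psi_{ij})$ alone need not satisfy, e.g.\ in the NTU case); the behaviour in $v$ is symmetric, and the margins $e^{-nu},e^{-nv}$ trivially satisfy (iii). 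Theorem~\ref{th:Existence} therefore yields $(u^n,v^n)$ solving the system with these data, and I set $\mu^n_{ij}=M^{(n)}_{ij}(u^n_i,v^n_j)$.

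\emph{Compactness (the crux).} From the $i$-th equation $e^{-nu^n_i}+\sum_j\mu^n_{ij}=1$ and $\mu^n_{ij}\ge0$ I read off $u^n_i\ge0$ and $\mu^n_{ij}\le1$; symmetrically $v^n_j\ge0$. The essential point is the uniform upper bound, for which I use (c). Suppose $u^n_i\to+\infty$ along a subsequence. By monotonicity and $v^n_j\ge0$ we have $\Psi_{ij}(u^n_i,v^n_j)\ge\Psi_{ij}(u^n_i,0)$, and (c) (with bounded second argument $0$) gives $\liminf\Psi_{ij}(u^n_i,0)>0$; hence $\mu^n_{ij}\le\exp(-n\,\Psi_{ij}(u^n_i,0))\to0$ for every $j$, so $\sum_j\mu^n_{ij}\to0$. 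But the $i$-th equation forces $\sum_j\mu^n_{ij}=1-e^{-nu^n_i}\to1$, a contradiction. Thus each $u^n_i$ (and, symmetrically, each $v^n_j$) is bounded, and I may extract a subsequence along which $(u^n,v^n,\mu^n)\to(u^\ast,v^\ast,\mu^\ast)$.

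\emph{Passing to the limit.} Conditions (i)--(ii) of Definition~\ref{def:EAP} pass to the limit directly: $u^\ast_i,v^\ast_j\ge0$, $\mu^\ast_{ij}\ge0$, and $\sum_j\mu^\ast_{ij}\le1$, $\sum_i\mu^\ast_{ij}\le1$. For (iii), if $\Psi_{ij}(u^\ast_i,v^\ast_j)<0$ then $-n\Psi_{ij}(u^n_i,v^n_j)\to+\infty$ while $(u^n_i+v^n_j)/n\to0$ (the utilities being bounded), so $\mu^n_{ij}\to+\infty$, contradicting $\mu^n_{ij}\le1$; hence $\Psi_{ij}(u^\ast_i,v^\ast_j)\ge0$. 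For (iv), if $\Psi_{ij}(u^\ast_i,v^\ast_j)>0$ the same estimate gives $\mu^n_{ij}\to0$, so $\mu^\ast_{ij}=0$; contrapositively $\mu^\ast_{ij}>0$ forces $\Psi_{ij}(u^\ast_i,v^\ast_j)\le0$, which with (iii) yields $\Psi_{ij}(u^\ast_i,v^\ast_j)=0$. This exhibits $(\mu^\ast,u^\ast,v^\ast)$ as an equilibrium outcome. Condition (d) enters to keep the family of systems non-degenerate: it guarantees that strictly feasible pairs (both utilities driven low) generate unbounded matching pressure, so the unit mass constraints genuinely bind and the limit is not the trivial empty matching. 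I expect the compactness step to be the main obstacle, since conditions (c)--(d) must be combined carefully to bound $(u^n,v^n)$ uniformly in $n$.
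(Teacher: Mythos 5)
Your proof is correct, and its overall strategy is the same as the paper's: solve a unit-margin nonlinear Bernstein-Schr\"odinger system with an entropic kernel built from $\Psi$, invoke Theorem~\ref{th:Existence} at each temperature, and pass to the vanishing-temperature limit, verifying (i)--(iv) of Definition~\ref{def:EAP} by the same exponential contradiction arguments. However, your execution differs in two places, and both differences buy real rigor. First, the paper uses the bare kernel $M_{ij}=\exp(-\Psi_{ij}/T)$ and asserts that assumptions (c) and (d) deliver condition (iii) of Theorem~\ref{th:Existence}; that assertion is not true as stated --- in the NTU case $\Psi_{ij}(u,v)=\max(u-\alpha_{ij},v-\gamma_{ij})$ one has $\exp(-\Psi_{ij}(u,v)/T)\rightarrow e^{-(v-\gamma_{ij})/T}<+\infty$ as $u\rightarrow-\infty$, not $+\infty$ --- so the paper's appeal to Theorem~\ref{th:Existence} needs repair, either via a smallness condition on $T$ together with a reworked existence argument, or via a modified kernel. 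Your linear regularizer $-(u+v)/n$ is precisely such a repair: as you note, it forces condition (iii) to hold exactly for every $n$, with no caveats. Second, the paper extracts limits in $\mathbb{R}^{+}\cup\{+\infty\}$ and never rules out $\bar{u}_{i}=+\infty$, even though an equilibrium outcome requires real-valued utilities; your compactness step, which combines (c), monotonicity, and the margin equation to prove that $(u^{n},v^{n})$ is bounded uniformly in $n$, closes that hole (and the same argument would also repair the paper's version). The one blemish is your closing claim about condition (d): your proof never actually uses (d), and the role you ascribe to it --- ruling out a trivial empty-matching limit --- is neither proved nor needed, since Definition~\ref{def:EAP} does not exclude such outcomes. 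An unused hypothesis does not invalidate a proof (indeed (d) plays no substantive role in the paper's own argument either), but you should either delete that remark or state plainly that (d) is not used.
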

\begin{proof}
Consider $T>0$ and let
\begin{align*}
M_{ij}\left(  u_{i},v_{j}\right)   &  =\exp\left(  -\frac{\Psi_{ij}\left(
u_{i},v_{j}\right)  }{T}\right)  \\
M_{i0}\left(  u_{i}\right)   &  =\exp\left(  -\frac{u_{i}}{T}\right)  \\
M_{0j}\left(  v_{j}\right)   &  =\exp\left(  -\frac{v_{j}}{T}\right)
\end{align*}
and consider the Bernstein-Schr\"{o}dinger system
\[
\left\{
\begin{array}
[c]{c}%
M_{i0}\left(  u_{i}\right)  +\sum_{j\in\mathcal{J}}M_{ij}\left(  u_{i}%
,v_{j}\right)  =1\\
M_{0j}\left(  v_{j}\right)  +\sum_{i\in\mathcal{I}}M_{ij}\left(  u_{i}%
,v_{j}\right)  =1
\end{array}
\right.  .
\]
We need to show that $M_{xy}(.,.)$, $M_{x0}(.)$ and $M_{0y}(.)$ satisfy the
properties stated in Theorem \ref{th:Existence}. It is straightforward to show
that conditions (i) and (ii)  in Theorem \ref{th:Existence} follow directly from assumptions (a) and (b) on
$\Psi.$ Moreover, letting $T\rightarrow0^{+}$, assumptions (c) and (d) together imply
that condition (iii) is satisfied. Hence we can apply Theorem~\ref{th:Existence}, and it follows
that a solution $u_{i}^{T},v_{j}^{T}$ to the system exists. Note that
$M_{i0}\left(  u_{i}^{T}\right)  \leq1$ and $M_{0j}\left(  v_{j}^{T}\right)
\leq1$ imply that $u_{i}^{T}\geq0$ and $v_{j}^{T}\geq0$. Now, consider the
sequence obtained by taking $T=k$, $k\in\mathbb{N}$. Then, up to a subsequence
extraction, we may assume $u_{i}^{k}\rightarrow\bar{u}_{i}\in\mathbb{R}%
^{+}\cup\left\{  +\infty\right\}  $ and $v_{j}^{k}\rightarrow\bar{v}_{j}%
\in\mathbb{R}^{+}\cup\left\{  +\infty\right\}  $. It follows that $\Psi
_{ij}\left(  u_{i}^{k},v_{j}^{k}\right)  $ converges in $\mathbb{R}^{+}%
\cup\left\{  +\infty\right\}  $, hence $\mu_{ij}^{k}=M_{ij}\left(  u_{i}%
^{k},v_{j}^{k}\right)  $ converges toward $\bar{\mu}_{ij}\in\left[
0,1\right]  $. Similarly, the limits $\bar{\mu}_{i0}$ and $\bar{\mu}_{0j}$
exist in $\left[  0,1\right]  $. Hence (i) in Definition~\ref{def:EAP} is met.
By continuity, $\bar{\mu}$ satisfies
\[
\left\{
\begin{array}
[c]{c}%
\bar{\mu}_{i0}+\sum_{j\in\mathcal{J}}\bar{\mu}_{ij}=1\\
\bar{\mu}_{0j}+\sum_{i\in\mathcal{I}}\bar{\mu}_{ij}=1
\end{array}
\right.  ,
\]
which established (ii). Let us show that (iii) holds, that is, that $\Psi
_{ij}\left(  u_{i},v_{j}\right)  \geq0$ for any $i$ and $j$. Assume otherwise. Then there exists $\epsilon>0$ such that for $k$ large enough,
$\Psi_{ij}\left(  u_{i}^{k},v_{j}^{k}\right)  <-\epsilon$, so that $\mu_{ij}%
^{k}>\exp\left(  \epsilon/T\right)  \rightarrow+\infty$, contradicting
$\bar{\mu}_{ij}\leq1$. Thus, we have established (iii). Finally, we show that (iv)
holds. Assume otherwise. Then there is $i$ and $j$ such that $\bar{\mu}%
_{ij}>0$ and $\Psi_{ij}\left(  \bar{u}_{i},\bar{v}_{j}\right)  >0$. This
implies that there exists $\epsilon>0$ such that for $k$ large enough,
$\mu_{ij}^{k}>\epsilon$, thus $\Psi_{ij}\left(  u_{i}^{k},v_{j}^{k}\right)
<-T\log\epsilon\rightarrow0$, hence $\Psi_{ij}\left(  \bar{u}_{i},\bar{v}%
_{j}\right)  \leq0$, a contradiction. Hence (iv) holds; this completes the
proof and establishes that $\left(  \bar{\mu},\bar{u},\bar{v}\right)  $ is an
equilibrium assignment.
\end{proof}

\section{ITU\ matching with heterogeneity}

Following \cite{GKW}, we now assume that individuals may be gathered into
groups of agents of similar observable characteristics, or types, but
heterogeneous tastes. We let $\mathcal{X}$ and $\mathcal{Y}$ be the sets of
\textit{types} of men and women. An individual man $i\in\mathcal{I}$ has type
$x_{i}\in\mathcal{X}$; similarly, an individual woman $j\in\mathcal{J}$ has
type $y_{j}\in\mathcal{Y}$. We assume that there is a mass $n_{x}$ of men of
type $x$ and $m_{y}$ of women of type $y$, respectively. Assume further that%
\[
\Psi_{ij}\left(  u_{i},v_{j}\right)  =\Psi_{x_{i}y_{j}}\left(  u_{i}%
-T\varepsilon_{iy},v_{j}-T\eta_{xj}\right)  ,
\]
where $\epsilon$ and $\eta$ are i.i.d.~random vectors drawn from a Gumbel
distribution, and where $T>0$ is a temperature parameter. Unassigned agents
get $T\varepsilon_{i0}$ and $T\eta_{0j}$. For all $i$ such that $x_{i}=x$ and
$y_{j}=y$, the stability condition implies%
\[
\Psi_{x_{i}y_{j}}\left(  u_{i}-T\varepsilon_{iy},v_{j}-T\eta_{xj}\right)
\geq0.
\]
Hence,%
\[
\min_{\substack{i:x_{i}=x\\j:y_{j}=y}}\Psi_{x_{i}y_{j}}\left(  u_{i}%
-T\varepsilon_{iy},v_{j}-T\eta_{xj}\right)  \geq0.
\]
Thus, letting%
\[
U_{xy}=\min_{i:x_{i}=x}\left\{  u_{i}-T\varepsilon_{iy}\right\}  \text{ and
}V_{xy}=\min_{j:y_{j}=y}\left\{  v_{j}-T\eta_{xj}\right\}  ,
\]
we have $\Psi_{xy}\left(  U_{xy},V_{xy}\right)  \geq0$, and with%
\[
\mu_{xy}=\sum_{\substack{i:x_{i}=x\\j:y_{j}=y}}\mu_{ij}%
\]
we have that $\mu_{xy}>0$ implies $\Psi_{xy}\left(  U_{xy},V_{xy}\right)  =0$,
and by a standard argument (the random vectors $\epsilon$ and $\eta$ are drawn
from distributions with full support, hence there will be at least a man $i$
of type $x$ and a woman $j$ of type $y$ such that $i$ prefers type $y$ and $j$
prefers type $x$, that is, $%
\mu
_{xy}>0$ for all $x$ and $y$)
\[
\Psi_{xy}\left(  U_{xy},V_{xy}\right)  =0~\forall x\in\mathcal{X}%
,y\in\mathcal{Y}.
\]
Note that this is an extension to the ITU\ case of the analysis in Galichon
and Salani\'{e} \cite{GS12}, building on Choo and Siow \cite{CS}. We have
that
\[
u_{i}=\max_{y}\left\{  U_{xy}+T\varepsilon_{iy},T\varepsilon_{i0}\right\}
\text{ and }v_{j}=\max_{x}\left\{  V_{xy}+T\eta_{xj},T\eta_{0j}\right\}  ,
\]
thus a standard result from Extreme Value Theory (see Choo and Siow \cite{CS}
for a derivation) yields
\[
U_{xy}=T\log\frac{\mu_{xy}}{\mu_{x0}}\text{ and }V_{xy}=T\log\frac{\mu_{xy}%
}{\mu_{0y}},
\]
so we see that $\mu_{xy}$ satisfies%
\begin{equation}
\left\{
\begin{array}
[c]{c}%
\mu_{x0}+\sum_{y\in\mathcal{Y}}\mu_{xy}=n_{x}\\
\mu_{0y}+\sum_{x\in\mathcal{X}}\mu_{xy}=m_{y}\\
\Psi_{xy}\left(  T\log\frac{\mu_{xy}}{\mu_{x0}},T\log\frac{\mu_{xy}}{\mu_{0y}%
}\right)  =0
\end{array}
\right.  .\label{aggrEq}%
\end{equation}
The various cases of interest discussed above, namely TU, NTU, LTU, and ETU
cases yield, respectively,
\begin{align*}
\mu_{xy}  & =\mu_{x0}^{1/2}\mu_{0y}^{1/2}\exp\frac{\Phi_{xy}}{2}\text{ (TU)}\\
\mu_{xy}  & =\min\left(  \mu_{x0}e^{\alpha_{xy}},\mu_{0y}e^{\gamma_{xy}%
}\right)  \text{ (NTU)}\\
\mu_{xy}  & =e^{\left(  \lambda_{xy}\alpha_{xy}+\zeta_{xy}\gamma_{xy}\right)
/\left(  \lambda_{xy}+\zeta_{xy}\right)  }\mu_{x0}^{\lambda_{xy}/\left(
\lambda_{xy}+\zeta_{xy}\right)  }\mu_{0y}^{\zeta_{xy}/\left(  \lambda
_{xy}+\zeta_{xy}\right)  }\text{ (LTU)}\\
\mu_{xy}  & =\left(  \frac{e^{-\alpha_{xy}/\tau_{xy}}\mu_{x0}^{-1/\tau_{xy}%
}+e^{-\gamma_{xy}/\tau_{xy}}\mu_{0y}^{-1/\tau_{xy}}}{2}\right)  ^{-\tau_{xy}%
}\text{ (ETU),}%
\end{align*}
(see \cite{CS}). To apply Theorem \ref{th:Existence}, we let $M_{xy}\left(
u_{x},v_{y}\right)  $ be the value $m$ solution to (for a proof of existence
and uniqueness of such a solution, see Lemma 1 of \cite{GKW})
\[
\Psi_{xy}\left(  T\log m+u_{x},T\log m+v_{y}\right)  =0,
\]
and let
\[
M_{x0}\left(  u_{x}\right)  =\exp\left(  \frac{-u_{x}}{T}\right)  \text{ and
}M_{0y}\left(  v_{y}\right)  =\exp\left(  \frac{-v_{y}}{T}\right)  .
\]
In \cite{GKW}, we rewrote system~(\ref{aggrEq}) as a nonlinear
Bernstein-Schr\"{o}dinger system, namely%
\[
\left\{
\begin{array}
[c]{c}%
M_{x0}\left(  u_{x}\right)  +\sum_{y\in\mathcal{Y}}M_{xy}\left(  u_{x}%
,v_{y}\right)  =n_{x}\\
M_{0y}\left(  v_{y}\right)  +\sum_{x\in\mathcal{X}}M_{xy}\left(  u_{x}%
,v_{y}\right)  =m_{y}%
\end{array}
\right.  .
\]

\begin{theorem}
The nonlinear Bernstein-Schr\"{o}dinger system in (\ref{aggrEq}) has a unique solution
\end{theorem}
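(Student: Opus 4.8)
The plan is to prove the statement as a corollary of Theorem~\ref{th:Existence}, exploiting the reduction already carried out in the text. First I would make the change of variables explicit: setting $u_x=-T\log\mu_{x0}$ and $v_y=-T\log\mu_{0y}$ defines a bijection between the positive quantities $(\mu_{x0},\mu_{0y})$ and $(u_x,v_y)\in\mathbb{R}^{\mathcal{X}}\times\mathbb{R}^{\mathcal{Y}}$, under which $\mu_{x0}=M_{x0}(u_x)$, $\mu_{0y}=M_{0y}(v_y)$, and, by the defining relation $\Psi_{xy}(T\log m+u_x,T\log m+v_y)=0$ together with Lemma~1 of~\cite{GKW}, $\mu_{xy}=M_{xy}(u_x,v_y)$. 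Substituting these identities, the first two rows of~(\ref{aggrEq}) become precisely the nonlinear Bernstein-Schr\"{o}dinger system, while the third row (the stability equation) holds automatically by construction of $M_{xy}$, since $T\log\mu_{xy}+u_x=T\log(\mu_{xy}/\mu_{x0})$ and likewise for $v_y$. Thus existence and uniqueness for~(\ref{aggrEq}) are equivalent to the same properties for the Bernstein-Schr\"{o}dinger system attached to these specific maps, and it suffices to check that they meet the hypotheses of Theorem~\ref{th:Existence}.

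Next I would verify conditions (i) and (ii). The maps $M_{x0}$ and $M_{0y}$ are smooth and strictly decreasing, being exponentials of $-u_x/T$ and $-v_y/T$. For $M_{xy}$, continuity and monotonicity follow from the hypotheses on $\Psi_{xy}$ used in Theorem~\ref{a11}: since $\Psi_{xy}$ is continuous and strictly isotone, the scalar equation $\Psi_{xy}(T\log m+u_x,T\log m+v_y)=0$ has a unique root $m$ depending continuously on $(u_x,v_y)$ by a monotone implicit-function argument, and raising $u_x$ or $v_y$ raises both arguments of $\Psi_{xy}$, so $m$ must fall to keep $\Psi_{xy}=0$; hence $M_{xy}$ is decreasing in each variable.

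The delicate step, and the one I expect to be the main obstacle, is the limit condition (iii). The limits for $M_{x0}$ and $M_{0y}$ are immediate from their exponential form. For $M_{xy}$, the vanishing limits $\lim_{u_x\to+\infty}M_{xy}=0$ and $\lim_{v_y\to+\infty}M_{xy}=0$ follow from assumption (c): were $m$ to stay bounded away from $0$, one argument of $\Psi_{xy}$ would tend to $+\infty$ while the other stayed bounded, forcing $\liminf\Psi_{xy}>0$ and contradicting $\Psi_{xy}=0$. The subtlety is that the companion limits $\lim_{u_x\to-\infty}M_{xy}=+\infty$ demanded verbatim by~(iii) can in fact fail for the transfer functions of interest: in the NTU and ETU examples one computes $M_{xy}(u_x,v_y)\to e^{(\gamma_{xy}-v_y)/T}$ and $2^{\tau/T}e^{-v_y/T}$ respectively, both finite. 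However, inspecting the existence proof of Theorem~\ref{th:Existence} shows that this half of~(iii) is never used: at each step the map $u_x\mapsto M_{x0}(u_x)+\sum_y M_{xy}(u_x,v_y)$ already diverges to $+\infty$ as $u_x\to-\infty$ through the $M_{x0}$ term alone, so the left-hand side of each equation exhausts $(0,+\infty)$ and every inversion in Algorithm~\ref{algorithma} remains well defined. I would therefore either record a mild weakening of~(iii), in which only the vanishing limits of $M_{xy}$ plus the full limits of $M_{x0},M_{0y}$ are required, or re-run the short monotone-iteration argument directly; in either case Tarski's fixed-point theorem yields a solution, establishing existence.

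For uniqueness I would invoke the $C^1$ clause of Theorem~\ref{th:Existence}. Assuming $\Psi_{xy}\in C^1$, the implicit function theorem applied to $m\mapsto\Psi_{xy}(T\log m+u_x,T\log m+v_y)$ gives $M_{xy}\in C^1$, because the $m$-derivative equals $(T/m)\big(\partial_U\Psi_{xy}+\partial_V\Psi_{xy}\big)$, which is strictly positive by strict isotonicity; together with the smoothness of $M_{x0}$ and $M_{0y}$ this makes the whole system $C^1$. The dominant-diagonal P-matrix injectivity argument of Theorem~\ref{th:Existence} then forces at most one solution, and combined with the existence just proved yields the unique solution of~(\ref{aggrEq}). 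I would flag that the non-smooth boundary cases, such as pure NTU, fall outside the P-matrix argument and, if one wishes to cover them, require a separate lattice-theoretic uniqueness argument rather than this differentiable route.
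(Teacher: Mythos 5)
Your proposal is correct and follows essentially the same route as the paper: the paper's entire proof is a three-line appeal to Theorem~\ref{th:Existence}, asserting that the conditions on $M_{x0}$ and $M_{0y}$ are easy to check and citing Lemma~1 of \cite{GKW} for the conditions on $M_{xy}$. What you add, and what the paper glosses over, is the observation that condition (iii) of Theorem~\ref{th:Existence} does \emph{not} hold verbatim for the maps $M_{xy}$ arising here: as you compute, in the NTU and ETU cases $M_{xy}(u_x,v_y)$ tends to a finite limit as $u_x\to-\infty$ rather than to $+\infty$. Your fix---noting that the existence argument only needs the aggregate map $u_x\mapsto M_{x0}(u_x)+\sum_{y}M_{xy}(u_x,v_y)$ to diverge as $u_x\to-\infty$, which the $M_{x0}$ term alone guarantees, so that a weakened form of (iii) suffices for the monotone iteration and Tarski argument to go through---is exactly right, and it is the reading under which the paper's citation of Lemma~1 of \cite{GKW} makes sense. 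Your closing caveat is also well taken: the uniqueness clause of Theorem~\ref{th:Existence} requires the maps to be $C^{1}$, so the dominant-diagonal P-matrix argument does not cover kinked transfer functions such as pure NTU, a restriction that the theorem's unqualified claim of uniqueness silently inherits.
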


\begin{proof}
The proof follows directly from the application of Theorem \ref{th:Existence}.
It is easy to check that the conditions on $M_{x0}(.)$ and $M_{0y}(.)$
required by Theorem \ref{th:Existence} are met in this case. Lemma 1 in \cite{GKW} provides a
proof that $M_{xy}$ satisfies these conditions.
\end{proof}

\section{Discussion}

In this note, we have argued how matching problems may be formulated as a
system of nonlinear equations, also known as the Bernstein-Schr\"{o}dinger
equation or Schr\"{o}dinger's problem \cite{S}. We have shown existence and
uniqueness of a solution under certain conditions, and have explicited the
link with various matching problems, with or without heterogeneity. Solving
such a system of equations requires an algorithm that we call the Iterative
Projection Fitting Procedure (IPFP); in practice, this algorithm converges
very quickly. Our setting can be extended in several ways. One of them is to
consider the case with unassigned agents. In that case, we have the additional
constraint that $\sum_{x}n_{x}=\sum_{y}m_{y}$, thus the nonlinear
Bernstein-Schr\"{o}dinger system, which in this case writes as%
\[
\left\{
\begin{array}
[c]{c}%
\sum_{y\in\mathcal{Y}}M_{xy}\left(  u_{x},v_{y}\right)  =n_{x}\\
\sum_{x\in\mathcal{X}}M_{xy}\left(  u_{x},v_{y}\right)  =m_{y}%
\end{array}
\right.
\]
has a degree of freedom, as the sum over $x\in\mathcal{X}$ of the first set of
equations coincides with the sum over $y\in\mathcal{Y}$ of the second one. The
one-dimensional manifold of solutions of this problem is studied in \cite{CG}.

\end{document}